\documentclass[12pt,a4paper]{article}

\usepackage[utf8]{inputenc}
\usepackage[T1]{fontenc}
\usepackage{lmodern}
\usepackage{microtype}
\usepackage{amsmath,amssymb,amsthm}
\usepackage{physics}          
\usepackage{mathtools}
\usepackage{geometry}
\usepackage{setspace}
\usepackage{hyperref}
\usepackage{cleveref}
\usepackage{graphicx}
\usepackage{booktabs}
\usepackage{enumitem}
\usepackage{xcolor}
\usepackage{tikz}
\usetikzlibrary{arrows.meta,positioning}

\hypersetup{
  colorlinks=true,
  linkcolor=blue!60!black,
  citecolor=blue!60!black,
  urlcolor=blue!60!black,
  pdftitle={Information as Counterfactual Structure},
  pdfauthor={Madhurendra Mishra}
}

\usepackage[
backend=biber,
style=numeric,
sorting=none
]{biblatex}
\theoremstyle{plain}
\newtheorem{theorem}{Theorem}[section]
\newtheorem{proposition}[theorem]{Proposition}

\theoremstyle{definition}
\newtheorem{definition}[theorem]{Definition}

\theoremstyle{remark}

\newcommand{\PS}{\mathcal{P}}

\newcommand{\HS}{\mathcal{H}}

\begin{document}

\title{\textbf{Information Is Not Physical: Possibility Spaces, Erasure, and the Structure of Unrealized Alternatives}}

\author{
  Madhurendra Mishra\\[4pt]
  \small\href{mailto:}{madhurendramishra24@gmail.com} 
}

\maketitle

\begin{abstract}
The slogan ``information is physical,'' introduced by Rolf Landauer and
developed through quantum information theory and black-hole thermodynamics,
has achieved near-axiomatic status in modern physics. Yet the ontological
status of information remains surprisingly underexamined: most discussions
either reduce information to a form of energy or treat it as a purely
mathematical object. This paper proposes a third position. I argue that
information is neither a physical substance nor a free-floating
abstraction, but rather \emph{the structure of physically realizable
alternatives}--a counterfactual structure that a physical system
instantiates in virtue of the possibility space available to it. Building
on Shannon's combinatorial definition, the Landauer principle, the
no-cloning theorem, and the black-hole information paradox, I show that
the informational content of any physical event is constituted by the set
of outcomes that \emph{could have occurred} but did not. This
counterfactual reading dissolves several persistent confusions: it explains
why erasing information dissipates heat without making information
``material,'' why quantum superposition is informationally richer than any
classical mixture, and why information loss in black holes is physically
significant beyond mere bookkeeping. The proposal sits within a structural-
realist framework but departs from standard structural realism by locating
the relevant structure in modal, not merely actual, relations. I conclude by
sketching implications for the foundations of quantum mechanics, quantum
gravity, and scientific ontology more broadly.

\medskip\noindent
\textbf{Keywords:} information ontology, counterfactual structure, Landauer
principle, no-cloning theorem, black-hole information paradox, quantum
foundations, structural realism, modal realism, Shannon entropy
\end{abstract}

\section{Introduction}
\label{sec:intro}

\subsection{The ascendancy of information in physics}

Sometime in the latter half of the twentieth century, physics quietly
acquired a new primitive. Energy and matter had long provided the
conceptual backbone of physical theories; information arrived later and
with less ceremony, slipping in through thermodynamics, computation
theory, and quantum mechanics almost before anyone had time to ask what,
precisely, it was. By the 1990s, John Wheeler could summarize a decades-long
intuition with the aphorism ``it from bit'': every physical entity and every
physical law, he suggested, derives its existence from answers to yes-or-no
questions~\cite{Wheeler1990}. The phrase was deliberately provocative, but
it pointed at something that theorists across very different sub-fields were
already treating as obvious--that information is not merely a bookkeeping
device imposed on physics from outside, but something that physics itself
must account for.

The sentiment hardened into a research programme. Quantum information
theory made explicit what had been implicit in the foundations of quantum
mechanics: that measuring, copying, and transmitting quantum states are
fundamentally constrained operations, not just engineering challenges
\cite{Nielsen2000}. Black-hole thermodynamics made the stakes existential:
if a black hole can destroy the records of every quantum state that falls
into it, something has gone wrong at the foundations of either general
relativity or quantum mechanics~\cite{Hawking1976,Preskill1992}. In both
domains, information has become a quantity whose conservation, or violation
thereof, carries deep physical significance.

\subsection{The problem: what does ``information is physical'' mean?}

The canonical formulation of the dominant view is Landauer's
principle~\cite{Landauer1961}: logically irreversible operations, such as
erasing a bit, must dissipate at least $k_B T \ln 2$ of energy into the
environment. This result has been verified in micro-scale
experiments~\cite{Berut2012,Jun2014}, and it is routinely cited as proof
that information has physical reality because it has physical
consequences.

Yet ``physical consequences'' and ``physical substance'' are not the same
thing. Shadows have physical consequences; few physicists think shadows are
substances. The shadow analogy, while imprecise, points to a gap in the
standard argument: showing that handling information costs energy does not,
by itself, settle whether information is a \emph{type} of physical stuff,
a \emph{property} of physical systems, or something else entirely.

Three broad positions have been staked out in the literature:

\begin{enumerate}[leftmargin=*]
  \item \textbf{Information physicalism}: information is a form of energy
  or a physical quantity on a par with mass or charge
  \cite{Landauer1991,Deutsch1985}.
  \item \textbf{Information abstractionism}: information is a purely
  mathematical object, causally inert, instantiated by--but not
  identical to--any physical system \cite{Floridi2011,Chalmers1996}.
  \item \textbf{Structural realism}: information is a higher-order
  relational structure realized by, but not reducible to, physical
  substrates \cite{Ladyman2007,Floridi2008}.
\end{enumerate}

Each position captures something important, and each faces serious
objections. The physicalist view makes sense of Landauer's principle but
struggles to say what information \emph{is} independent of its energetic
cost. The abstractionist view preserves the substrate-independence of
information but makes it hard to see why physics should care about it.
Structural realism avoids both problems but--in most formulations--
characterizes information in terms of \emph{actual} relational structure,
leaving the modal dimension unexplained.

\subsection{Thesis and plan}

This paper defends a fourth position: \emph{information as counterfactual
structure}. The central claim is:

\begin{quote}
Information is the structure of physically realizable alternatives. The
informational content of an event is constituted not by the event itself
but by the possibility space from which the event was selected.
\end{quote}

Three corollaries follow immediately. First, a system that could only have
been in one state carries no information, regardless of how energetically
consequential that state is. Second, the richness of information about a
system tracks the cardinality and geometry of the possibility space
available to it--superposition being the quantum case of a richer
possibility space than classical mixing. Third, information ``loss'' is
loss of counterfactual distinctness: a black hole that destroys information
collapses formerly distinct possibility spaces into a single future, which
is why physicists treat this as catastrophic.

The argument proceeds as follows. \Cref{sec:history} traces the
historical embedding of information in physics, from Shannon's combinatorial
theory through thermodynamics to Wheeler. \Cref{sec:landauer} examines
Landauer's principle and Maxwell's demon, arguing that the energetic cost
of erasure is best understood as the cost of \emph{collapsing} a
possibility space. \Cref{sec:quantum} extends the analysis to quantum
information: the no-cloning theorem and the structure of quantum
measurement both make better sense on the counterfactual reading.
\Cref{sec:blackhole} applies the framework to the black-hole information
paradox. \Cref{sec:ontology} develops the philosophical proposal in detail,
distinguishing it from modal realism and standard structural realism.
\Cref{sec:implications} draws implications for quantum gravity and
scientific ontology. \Cref{sec:conclusion} concludes.

\section{Information Before and Within Physics}
\label{sec:history}

\subsection{Shannon's combinatorial foundation}

Claude Shannon's 1948 paper established a precise, quantitative notion of
information that was deliberately divorced from meaning or
semantics~\cite{Shannon1948}. For a source with $n$ possible symbols and
probability distribution $\{p_i\}$, the information entropy is:

\begin{equation}
  H = -\sum_{i=1}^{n} p_i \log_2 p_i.
  \label{eq:shannon}
\end{equation}

Several features of this definition are philosophically crucial. First, $H$
is maximized when all outcomes are equally probable--that is, when the
possibility space is as undifferentiated as possible. Second, and more
importantly for the present argument, $H$ is \emph{zero when $n=1$}: a
source that always emits the same symbol carries no information. Information,
in Shannon's framework, is structurally tied to the number and distribution
of alternatives. The formula does not represent a property of the actual
output; it represents a property of the \emph{space of possible outputs}.

Shannon himself was careful to note that ``information'' in his technical
sense is not to be confused with meaning~\cite{Shannon1948}. What he may
not have fully emphasized is that it is equally not to be confused with the
\emph{content} of any particular message. The entropy $H$ is a property of
the \emph{source}, which is to say of the ensemble of possibilities, not of
any single realization of that ensemble.

\subsection{Thermodynamics: from Boltzmann to Maxwell's demon}

The connection between information and thermodynamics is older than Shannon,
though it was not understood as such until much later. Boltzmann's
statistical entropy:

\begin{equation}
  S = k_B \ln W,
  \label{eq:boltzmann}
\end{equation}

where $W$ is the number of microstates compatible with the macrostate, is
structurally parallel to~\eqref{eq:shannon}: both are logarithms of counts
of alternatives. The parallel was made explicit by Leo Szilard in
1929~\cite{Szilard1929} and again, more forcefully, by Leon
Brillouin~\cite{Brillouin1956}, who introduced the concept of
\emph{negentropy} to capture the idea that gaining information about a
system reduces its physical entropy.

Maxwell's demon provides the canonical thought experiment
\cite{Maxwell1871,Leff2002}. A demon monitoring individual gas molecules
in a divided box can, apparently, reduce the entropy of the gas without
doing work, violating the second law. The resolution, worked out in detail
by Szilard and later by Bennett~\cite{Bennett1982}, is that the demon must
eventually erase the record of its measurements, and this erasure
necessarily dissipates heat. What matters for the present argument is
\emph{why} erasure is costly: it is costly because erasure transforms a
two-state possibility space (the demon's memory register, which could hold
0 or 1) into a one-state space (the reset register, which must hold 0). The
physical cost is the cost of eliminating an alternative.

\subsection{Wheeler's ``It from Bit''}

Wheeler's programme radicalized the connection between information and
physical reality~\cite{Wheeler1990,Wheeler1989}. His suggestion was not
merely that information is an important quantity in physics, but that
physical reality itself is, in some sense, constituted by informational
acts--by answers to yes-or-no questions. The proposal has both an
epistemological and a metaphysical reading. The epistemological reading
is uncontroversial: our knowledge of any physical system is encoded in
binary measurement outcomes. The metaphysical reading is far more
contentious: physical entities \emph{are} informational structures.

Wheeler's metaphysical reading is too strong. It faces the familiar
objections to idealism: it is hard to say what binary choices there were
before any observer existed, and it risks making physics depend on
minds in ways that most physicists find unacceptable. The present proposal
accepts the epistemological reading and part of the metaphysical
motivation--information is not merely imposed on physics from outside--
while rejecting the idealism. The key move, developed below, is to anchor
information in \emph{physically realized possibility spaces}, which are
objective features of the world, not products of observation.

\section{Landauer's Principle and the Cost of Collapsing Possibilities}
\label{sec:landauer}

\subsection{Statement and experimental status}

Landauer's principle states that any logically irreversible operation on
a bit of information must dissipate at least:

\begin{equation}
  Q \geq k_B T \ln 2
  \label{eq:landauer}
\end{equation}

of heat into the environment, where $T$ is the temperature of the thermal
reservoir and $k_B$ is Boltzmann's constant~\cite{Landauer1961}. ``Logically
irreversible'' here means that the operation destroys information about the
prior state--most clearly, erasure (RESET to 0), which maps two possible
input states $\{0,1\}$ to a single output state $\{0\}$.

The principle was disputed for decades on theoretical grounds, but
experimental confirmation has now been achieved in colloidal particle
systems~\cite{Berut2012}, optically trapped particles~\cite{Jun2014}, and
nanomagnetic bits~\cite{Hong2016}. In each case, the measured heat
dissipation approaches $k_B T \ln 2$ as the erasure is performed quasi-
statically. The principle appears to be a genuine physical constraint.

\subsection{The standard interpretation and its gap}

The standard interpretation reads Landauer's result as establishing the
physicality of information: if erasing a bit necessarily costs energy, then
bits are physical~\cite{Landauer1991}. But this inference is problematic.
The energy cost is associated with the \emph{operation of erasure}, not
with the bit itself sitting in memory. A bit stored in a stable memory
register costs no energy to maintain (in the idealized, zero-temperature
limit). If information were a form of energy, one would expect it to be
conserved, not to appear and disappear depending on whether an operation is
performed on it.

The counterfactual reading offers a cleaner account. What Landauer's
principle establishes is not that information is energy, but that
\emph{collapsing a two-valued possibility space to a one-valued space has
an irreducible thermodynamic cost}. The cost arises because the physical
system encoding the bit must interact with its environment to move from a
state in which two microstates are accessible to a state in which only one
is. This is a constraint on the \emph{geometry of possibility spaces}, not
on any substance called information.

More precisely: the bit register prior to erasure is in a state in which
two macrostates, $\{M_0, M_1\}$, are thermodynamically accessible given
the system's energy. After erasure, only $M_0$ is accessible. By
Liouville's theorem and the second law, reducing the accessible volume of
phase space requires importing entropy from the environment, which
manifests as heat dissipation. The minimum heat $k_B T \ln 2$ corresponds
exactly to the logarithm of the ratio of possibility-space volumes: $\ln
(2/1) = \ln 2$.

\begin{proposition}
  \label{prop:landauer}
  Landauer's bound $Q \geq k_B T \ln 2$ is the thermodynamic cost of
  reducing a possibility space of cardinality $2$ to a possibility space
  of cardinality $1$, independent of any substantive claim about the
  physical nature of information.
\end{proposition}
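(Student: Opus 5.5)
The plan is to derive the bound from nothing more than the second law and a phase-space description of the register. Information will never appear as a primitive; at no point will we assign energy or entropy to "the bit" as such. I will model the register as a classical system with phase space $\Gamma$ partitioned into two regions $\Gamma_0,\Gamma_1$ of equal Liouville volume, corresponding to the macrostates $M_0,M_1$. The register is coupled to a heat bath at temperature $T$. Before the operation, the register's state is a distribution $\rho$ supported on $\Gamma_0\cup\Gamma_1$, giving each region weight $1/2$ (the maximally uncertain input, which is the case for which the operation must work on both inputs). The RESET operation is any physical process, Hamiltonian on register plus bath, that maps every initial microstate to one whose register part lies in $\Gamma_0$. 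The possibility space of the register thus goes from cardinality $2$ to cardinality $1$.

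The key steps, in order, are as follows.

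First, I will write the Gibbs entropy of the register, $S=-k_B\int\rho\ln\rho$. I will show that the change in its coarse-grained part satisfies $\Delta S_{\mathrm{reg}}\le -k_B\ln 2$. The argument is a chain rule: the total entropy splits into the Shannon entropy of the macrostate label (equation~\eqref{eq:shannon}, in nats, going from $\ln 2$ to $0$) plus the conditional entropy within each region. The within-region term cannot increase on average, because the final region $\Gamma_0$ has the same volume as each initial region and equilibrium within a region maximizes that term.

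Second, I will invoke Liouville's theorem for the joint register–bath Hamiltonian flow. This preserves total fine-grained entropy, so $\Delta S_{\mathrm{reg}}+\Delta S_{\mathrm{bath}}\ge 0$ once we account for correlations, using subadditivity of entropy.

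Third, I will take the bath to be large and in equilibrium. Then $\Delta S_{\mathrm{bath}}=Q/T$, which is Clausius' relation for a reservoir. Combining the three steps gives $Q\ge T\,k_B\ln 2$.

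Finally, I will note the observation that delivers the "independent of any substantive claim" clause. The only features of the register that entered the argument are the cardinality of the set of accessible macrostates and their equal volumes. The logarithm of the volume ratio, $\ln(2/1)$, is exactly the Boltzmann count of equation~\eqref{eq:boltzmann}. So the bound is a statement about the reduction of a possibility space.

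The step I expect to be the main obstacle is the first one combined with the correlation bookkeeping in the second. Liouville's theorem forbids compressing fine-grained volume. The argument therefore has to explain where the "lost" $\ln 2$ goes, namely into register–bath correlations and then into bath entropy. It must also rule out that it goes into residual correlations that never show up as heat. My first attempt is the standard route via relative entropy, in the style of the Sagawa–Ueda and Esposito–Van den Broeck approach. Start from a product initial state $\rho_{\mathrm{reg}}\otimes\rho_{\mathrm{bath}}^{\mathrm{Gibbs}}$. Then the identity $\Delta S_{\mathrm{reg}}+Q/T = k_B\,[\,I_{\mathrm{final}}+D(\rho'_{\mathrm{bath}}\Vert\rho_{\mathrm{bath}}^{\mathrm{Gibbs}})\,]\ge 0$ gives the inequality without any equilibrium assumption on the final state. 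If that identity proves awkward to state in the paper's notation, I will fall back on the quasi-static piston (Szilard) protocol. That protocol saturates the bound and illustrates that the bound is tight, but it does not by itself prove optimality.
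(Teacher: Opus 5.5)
Your proposal is correct and takes the paper's route: the paper cites the standard second-law-plus-Liouville argument and reads off the minimum heat as $k_B T\ln(2/1)$ from the ratio of accessible phase-space volumes, and your subadditivity and relative-entropy bookkeeping simply writes that argument out. One assumption should be stated explicitly: Step 1 needs the register to start uniformly distributed within each region (equivalently, in local equilibrium with a register Hamiltonian flat on $\Gamma_0\cup\Gamma_1$), since otherwise the within-region entropy can grow and only the weaker bound $Q\ge -T\,\Delta S_{\mathrm{reg}}$ survives.
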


The proof is standard--it follows from the second law and Liouville's
theorem~\cite{Bennett1982,Penrose2004}--but the \emph{interpretation} of
what is being calculated shifts on the counterfactual reading: the object
of interest is not a substance being destroyed but a space of alternatives
being collapsed.

\subsection{Maxwell's demon revisited}

On the counterfactual reading, Maxwell's demon thought experiment
illustrates that \emph{information gain and information loss are
symmetric thermodynamic operations}. When the demon measures a molecule's
velocity and records the result, it instantiates a correlation: its memory
register, which had a two-valued possibility space, now contains a definite
record. The demon's memory has gained structure--its own possibility space
has been constrained by the molecule's actual position in its possibility
space. When the demon erases its memory, the correlation is destroyed and
the possibility space of the register reopens. The heat dumped by this
erasure compensates for the work extracted earlier. The thermodynamic cycle
is closed because the opening and closing of possibility spaces are dual
operations with equal and opposite entropic costs.

\section{Quantum Information and the Structure of Superposition}
\label{sec:quantum}

\subsection{From bits to qubits}

A classical bit occupies one of two states, $\{0, 1\}$, with associated
possibility space of cardinality $2$. A qubit occupies a point in the
two-dimensional complex Hilbert space $\HS_2$:

\begin{equation}
  \ket{\psi} = \alpha\ket{0} + \beta\ket{1}, \quad
  |\alpha|^2 + |\beta|^2 = 1,
  \label{eq:qubit}
\end{equation}

where $\alpha, \beta \in \mathbb{C}$. The possibility space of a qubit is
not discrete; it is the Bloch sphere $S^2$, a continuous manifold of
physically realizable states. Before measurement in the computational basis,
the qubit does not occupy a definite classical possibility (0 or 1); it
occupies a superposition, which is a genuine physical state--not an
epistemic summary of our ignorance--with its own measurable consequences
(interference, entanglement).

This distinction is philosophically important. Classical probability theory
represents ignorance over a fixed set of alternatives; quantum superposition
represents a \emph{different kind of possibility space}, one whose elements
do not behave like classical possibilities. The counterfactual reading
preserves this asymmetry: a qubit in a superposition state $\ket{\psi}$
does not merely have unknown classical value; it instantiates a richer
possibility structure than any classical bit can realize.

For a system of $n$ qubits, the Hilbert space is $\HS_{2^n}$, with
dimension $2^n$. The von Neumann entropy:

\begin{equation}
  S(\rho) = -\mathrm{Tr}(\rho \log_2 \rho)
  \label{eq:vonneumann}
\end{equation}

generalizes Shannon entropy to quantum systems. For a pure state
$\rho = \ket{\psi}\bra{\psi}$, $S(\rho) = 0$: a pure superposition has
zero entropy not because it contains no information, but because it
occupies a \emph{definite position} in the Hilbert space--the uncertainty
concerns measurement outcomes, not the quantum state itself.

\subsection{The no-cloning theorem as a constraint on possibility spaces}

The no-cloning theorem~\cite{Wootters1982,Dieks1982} states that no
unitary operation can copy an arbitrary unknown quantum state:

\begin{theorem}[No-Cloning]
  There is no unitary operator $U$ on $\HS \otimes \HS$ such that
  for all $\ket{\psi} \in \HS$:
  \begin{equation}
    U(\ket{\psi} \otimes \ket{0}) = \ket{\psi} \otimes \ket{\psi}.
    \label{eq:noclone}
  \end{equation}
\end{theorem}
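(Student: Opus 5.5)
The plan is the standard inner-product argument, which uses only that a unitary operator preserves inner products. Suppose, for contradiction, that a unitary $U$ on $\HS \otimes \HS$ satisfies \eqref{eq:noclone} for every $\ket{\psi} \in \HS$. Pick two arbitrary normalized states $\ket{\psi}, \ket{\phi} \in \HS$ and write down both cloning equations, $U(\ket{\psi}\otimes\ket{0}) = \ket{\psi}\otimes\ket{\psi}$ and $U(\ket{\phi}\otimes\ket{0}) = \ket{\phi}\otimes\ket{\phi}$.

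Next, take the inner product of the two left-hand sides and of the two right-hand sides. Because $U^\dagger U = \mathbb{1}$, the left-hand side gives $\bra{\psi}\ket{\phi}\,\bra{0}\ket{0} = \bra{\psi}\ket{\phi}$. Using the factorization of inner products on tensor products, the right-hand side gives $\bra{\psi}\ket{\phi}^2$. Hence every pair of states must satisfy
\begin{equation*}
  \bra{\psi}\ket{\phi} = \bra{\psi}\ket{\phi}^2 ,
\end{equation*}
so $\bra{\psi}\ket{\phi} \in \{0,1\}$.

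To finish, I would show this fails for some pair. Assuming $\dim \HS \geq 2$ (the only case in which the theorem has content), choose orthonormal $\ket{a},\ket{b}$ and set $\ket{\psi} = \ket{a}$, $\ket{\phi} = (\ket{a}+\ket{b})/\sqrt{2}$. Then $\bra{\psi}\ket{\phi} = 1/\sqrt{2}$, which is neither $0$ nor $1$, a contradiction. So no such $U$ exists.

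There is no real obstacle; the argument is three lines. The points that need care are bookkeeping. First, the ancilla $\ket{0}$ must be normalized, so that $\bra{0}\ket{0} = 1$. Second, the statement should be read as requiring that $U$ clone \emph{all} states, since a unitary that clones a single orthonormal basis does exist (a CNOT-type map). The dichotomy $\bra{\psi}\ket{\phi}\in\{0,1\}$ shows exactly this: cloning is possible only on sets of mutually orthogonal or identical states. That fits the paper's reading that only a classical, discrete sub-space of the possibility space can be copied.
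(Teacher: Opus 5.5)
Your proof is correct and is essentially the paper's argument: because $U$ preserves inner products, any two cloned states must satisfy $\braket{\psi}{\phi} = \braket{\psi}{\phi}^2$. Your explicit pair with overlap $1/\sqrt{2}$ is tighter than the paper's version, which assumes only $s \neq 0$ and so does not by itself exclude $s = 1$. Choosing $\ket{\phi} = i\ket{\psi}$ in the same identity would also remove your $\dim \HS \geq 2$ caveat, since $i \neq i^2$.
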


\begin{proof}
  Standard: assume $U$ clones two non-orthogonal states $\ket{\psi}$ and
  $\ket{\phi}$ with $\braket{\psi}{\phi} = s \neq 0$. Unitarity requires
  $\braket{\psi}{\phi} = \braket{\psi}{\phi}^2$, giving $s = s^2$, so
  $s \in \{0,1\}$, contradicting the assumption~\cite{Nielsen2000}.
\end{proof}

The standard physical interpretation is that quantum information cannot be
duplicated because the state $\ket{\psi}$ encodes irreducibly unique
information. The counterfactual reading adds texture: what cannot be cloned
is a \emph{possibility structure}. Copying the actual outcome of a
measurement is trivial (classical records are freely copyable). What is
impossible is copying the superposition--the full range of possible
measurement outcomes, with their quantum-mechanical probability
amplitudes--because the Hilbert space structure of possibilities does not
admit duplication without either disturbing the original or requiring
knowledge of the state in advance.

This suggests that the ``uniqueness'' property that makes quantum
information non-clonable is a property of the possibility space, not of
any particular realized outcome. Classical information, by contrast, is
clonable precisely because classical possibilities are discrete and mutually
exclusive: copying a bit copies an \emph{element} of a possibility space,
not the possibility space itself.

\subsection{Measurement, decoherence, and the actualization of alternatives}

Quantum measurement is, on the counterfactual reading, the physical process
by which a superposition--a rich possibility structure--is reduced to an
actual outcome. This is not the place to adjudicate between interpretations
of quantum mechanics~\cite{Everett1957,Bohm1952,Zurek2003}, but the
counterfactual reading is compatible with several of them:

\begin{itemize}[leftmargin=*]
  \item On the \emph{Copenhagen interpretation}, measurement collapses the
  wave function, reducing the possibility space to a single actual state.
  The information content of the pre-measurement state was encoded in
  the superposition; after collapse, this information is partly destroyed
  (from the superposition) and partly actualized (as a definite outcome).
  \item On the \emph{Everett (many-worlds) interpretation}, no collapse
  occurs: all branches of the superposition are actualized. The
  information content of the original superposition is preserved globally,
  but from within any branch, the other possibilities become inaccessible
  \cite{Everett1957,Deutsch1985}.
  \item On \emph{decoherence} accounts, quantum-to-classical transition
  occurs when a system becomes entangled with its environment, and
  interference between branches becomes unmeasurable~\cite{Zurek2003}.
  What decoherence destroys is the \emph{accessibility} of alternatives,
  not the alternatives themselves.
\end{itemize}

In all three cases, the informational content of a quantum state is
constituted by the structure of the possibility space, and the measurement
process is a change in that structure--either an absolute reduction
(Copenhagen), a global preservation with local inaccessibility (Everett),
or a suppression of inter-branch coherence (decoherence). The counterfactual
reading provides a unified description: measurement is the (partial or
apparent) collapse of a possibility space into an actual event.

\section{The Black-Hole Information Paradox}
\label{sec:blackhole}

\subsection{Hawking radiation and the apparent loss of information}

Hawking's 1974 calculation showed that black holes radiate thermally
at temperature $T_H = \hbar c^3 / (8\pi G M k_B)$~\cite{Hawking1975}.
The radiation is thermal--it contains no information about the quantum
state of the matter that collapsed to form the black hole. If the black
hole eventually evaporates completely, the final state is thermal radiation
described by a mixed density matrix $\rho_{\mathrm{final}}$, regardless
of whether the initial state was a pure state $\ket{\psi_i}$. This
violates unitarity:

\begin{equation}
  \ket{\psi_i} \longrightarrow \rho_{\mathrm{final}},
  \label{eq:infoparadox}
\end{equation}

a pure-to-mixed transition forbidden by quantum mechanics. The paradox is
precisely that physics as we know it cannot simultaneously respect quantum
unitarity and the thermal character of Hawking radiation~\cite{Hawking1976}.

\subsection{Why the loss of information is physically catastrophic}

The usual explanation of why information loss matters appeals to unitarity:
quantum mechanics requires that evolution be described by a unitary operator,
and unitarity is equivalent to the conservation of information. But why does
unitarity matter so much? The standard answer--that unitarity is a
fundamental symmetry of quantum theory--is correct but somewhat
circular.

The counterfactual reading provides a more explanatory account. Consider
two quantum systems that began in distinct initial pure states
$\ket{\psi_i}$ and $\ket{\phi_i}$, with $\braket{\psi_i}{\phi_i} = 0$
(orthogonal, hence completely distinguishable). If both systems fall into
the black hole and the hole evaporates, both produce the same thermal
state. The two initial possibility spaces--the two distinct Hilbert-space
trajectories from $\ket{\psi_i}$ and $\ket{\phi_i}$--have been collapsed
into a single final state. There is no longer any fact about the world that
distinguishes the two histories.

On the counterfactual reading, this is precisely what information loss
means: \emph{formerly distinct possibility spaces have been merged}.
The physical catastrophe is not that some substance has been destroyed but
that two physically distinct histories--two distinct structures of
counterfactual alternatives--have become indistinguishable. This matters
because science, and quantum mechanics in particular, depends on the
principle that distinct initial conditions produce distinct final conditions.
If black holes can merge distinct possibility spaces into a single outcome,
the retrodictive power of physics is compromised at a fundamental level.

\subsection{Competing resolutions and their modal interpretation}

The three main proposed resolutions of the paradox map naturally onto the
counterfactual framework:

\begin{enumerate}[leftmargin=*]
  \item \textbf{Information is preserved and escapes in Hawking radiation}
  (Hawking's 2004 retraction, Page's calculation, soft-hair
  proposal~\cite{Hawking2016,Page1993}). The possibility structure of the
  infalling matter is encoded, very subtly, in quantum correlations of the
  Hawking radiation. The possibility space is not collapsed; it is
  scrambled~\cite{Hayden2007}.

  \item \textbf{Information is preserved in a remnant or new universe}.
  The possibility structure is preserved in a residual object or a baby
  universe~\cite{Giddings1992}. The possibility space is not merged; it
  is transferred to a region inaccessible to external observers.

  \item \textbf{Holographic principle / AdS-CFT correspondence}
  \cite{Maldacena1997,Susskind1995}. The physics of the bulk (interior)
  is encoded on the boundary: the possibility structure of the interior
  is not independent of the boundary; the boundary description is complete
  and unitary. There is no information loss because the boundary preserves
  the full possibility space.
\end{enumerate}

On all three resolutions, the central question is whether distinct
possibility spaces remain distinct. The holographic resolution is
particularly suggestive: the AdS/CFT correspondence implies that the bulk
information--the interior possibility structure--is fully encoded in the
boundary conformal field theory. This is a precise realization of the idea
that possibility structures are not localized substances but relational
features that can be described from multiple perspectives.

\section{Philosophical Analysis: The Ontology of Counterfactual Structure}
\label{sec:ontology}

\subsection{Existing positions and their limits}

Having examined how the counterfactual reading handles the three central
cases--Landauer erasure, quantum no-cloning, and black-hole information
loss--it is time to characterize the proposal more precisely and locate
it with respect to existing philosophical positions.

\paragraph{Information physicalism.}
The strongest version, associated with Landauer and Deutsch, holds that
information is a physical quantity, subject to conservation laws, with
definite location in space-time~\cite{Landauer1991,Deutsch2011}. This
view makes information too much like matter or energy. Information does
not have a rest mass; it cannot be weighed. More importantly, information
is substrate-independent: the same bit can be encoded in a photon's
polarization, a transistor's voltage, a neuron's firing rate, or a
carved stone. A substance that is identical across so many different
physical realizations is not well-characterized as a physical substance
in the usual sense.

\paragraph{Information abstractionism.}
The view that information is a purely mathematical or logical object
\cite{Floridi2011} avoids the substrate problem but faces the interaction
problem: if information is causally inert, why does handling it cost energy?
The abstractionist must say that energy is spent on physical processes
that \emph{instantiate} information, not on information itself. This is
coherent but somewhat unsatisfying: it makes information explanatorily
epiphenomenal, which seems to conflict with the way physicists use it.

\paragraph{Structural realism.}
Ontic structural realism (OSR) holds that what exists is structure, not
objects~\cite{Ladyman2007,French2014}. Information fits naturally into
this picture as a structural property. Floridi's strongly semantic
information theory develops a structural approach to information
ontology~\cite{Floridi2008}. The difficulty is that OSR typically
characterizes structure in terms of \emph{actual} relations--the Ramsey
sentence of a theory, the pattern of natural-law connections, and so forth.
This leaves the modal dimension of information unexplained: why should the
number of alternatives matter, rather than just the actual relations among
realized states?

\subsection{Information as counterfactual structure: the proposal}

The proposal developed here can be stated precisely using the apparatus
of modal logic and standard probability theory.

\begin{definition}[Possibility Space]
  \label{def:PS}
  A \emph{possibility space} $\PS$ for a physical system $S$ is a
  measurable space $(\Omega, \mathcal{F})$, where $\Omega$ is the set of
  physically realizable states of $S$ under the laws of nature governing
  $S$, and $\mathcal{F}$ is the associated $\sigma$-algebra of events.
\end{definition}

\begin{definition}[Counterfactual Information]
  \label{def:CI}
  The \emph{counterfactual information} associated with a physical event
  $e \in \Omega$, relative to possibility space $\PS = (\Omega,
  \mathcal{F}, P)$ with probability measure $P$, is:
  \begin{equation}
    I(e) = -\log_2 P(e).
    \label{eq:selfinformation}
  \end{equation}
  The \emph{total counterfactual information structure} of $S$ is
  the Shannon entropy $H(\PS) = -\sum_{e \in \Omega} P(e) \log_2 P(e)$
  (or its integral analogue for continuous $\Omega$).
\end{definition}

Three features of this definition deserve comment.

First, $I(e)$ is a property of the pair $(e, \PS)$, not of $e$ alone.
The same physical event carries different informational content relative
to different possibility spaces. A particle detected at position $x$
carries one bit of information if the space had two equally probable
positions, and $\log_2 n$ bits if there were $n$ equally probable positions.
Information is inherently relational in this sense.

Second, when $|\Omega| = 1$ (the system could only have been in one state),
$I(e) = -\log_2 1 = 0$ for all $e$. A determined system carries no
information, regardless of its physical complexity. This is the formal
expression of the intuition that information requires alternatives.

Third, the definition is not committed to any particular interpretation
of probability. The probability measure $P$ can be frequentist,
propensity-theoretic, or Bayesian, as long as it is grounded in
\emph{physical possibilities}--the alternatives the system could have
realized given the laws and initial conditions. This grounds the
framework in objective features of the physical world without requiring
a commitment to any particular metaphysics of probability.

\subsection{Distinguishing the proposal from modal realism}

David Lewis's modal realism holds that all possible worlds exist
concretely~\cite{Lewis1986}. The counterfactual structure proposal might
seem to require something similar: if information is grounded in
unrealized possibilities, do those possibilities need to exist
concretely?

The answer is no, for two reasons. First, the proposal does not require
unrealized possibilities to exist; it requires only that statements about
what \emph{could have happened} are \emph{true or false} in virtue of
the actual physical laws and initial conditions. This is a commitment to
the objectivity of modal truths, not to the concrete existence of
possible worlds. It is compatible with actualist modal semantics, on
which modal truths are grounded in the dispositions and propensities of
actual objects~\cite{Ellis2001,Mumford2004}.

Second, the possibility spaces in the proposal are \emph{physically
constrained}: $\Omega$ is not the set of all logically possible states
but the set of states realizable under the actual laws of nature. This
distinguishes physically realizable alternatives from merely logical
possibilities, and it is the former that matter for information.

\subsection{Distinguishing the proposal from standard structural realism}

Standard OSR characterizes reality in terms of actual relational
structures--the symmetry groups, topology, and causal relations that
physics describes. The counterfactual structure proposal extends this:
the relevant structure is not just the actual relational network but the
\emph{modal envelope} of that network--the space of possible relational
configurations that the system could occupy.

This is a genuine departure from standard OSR, not merely a terminological
shift. Consider two systems with identical actual relational structures
(same spatial configuration, same causal relations) but different
possibility spaces (one is near a black hole, the other is in flat space).
Standard OSR would say they have the same structural properties. The
counterfactual reading says they carry different amounts of information,
because their possibility spaces differ. The physical consequences of this
difference are measurable: the black-hole system will lose information as
it crosses the horizon, while the flat-space system will not.

\subsection{A new proposal: modal-structural realism}

The position that emerges can be called \emph{modal-structural realism}:
reality consists of relational structures, but the relevant structures
include both actual and possible relations, and it is the latter that
constitute informational content. The ontic commitments are:

\begin{enumerate}[leftmargin=*]
  \item Physical systems instantiate possibility spaces: sets of
  physically realizable states, determined by the laws of nature.
  \item Information is a second-order property of physical events:
  it measures the degree to which the actual state selects among
  alternatives in the possibility space.
  \item Informational content is preserved under reversible physical
  processes (unitary evolution in quantum mechanics) because reversible
  processes map possibility spaces to possibility spaces bijectively.
  \item Irreversible processes (erasure, black-hole evaporation) reduce
  the cardinality of possibility spaces and therefore destroy information,
  at a thermodynamic cost.
\end{enumerate}

This framework is realist--it posits objective, mind-independent features
of the world--but not physicalist in the naïve sense: information is not a
substance or a field, but a structural property of physical reality.

\subsection{Objections and replies}

\paragraph{Objection 1: Possibility spaces are not real.}
One might argue that ``physically realizable alternatives'' is a
philosopher's fiction: what is real is the actual state, not the
alternatives that did not occur.

\emph{Reply}: The probability distributions that enter quantum mechanics
and statistical mechanics are not merely epistemic; they have measurable
physical consequences (interference, entropy, statistical mechanics of
phase transitions). A view on which possibility spaces are not real cannot
explain why the number of accessible microstates determines the equilibrium
thermodynamics of a system, or why quantum superpositions produce
interference that classical mixtures do not.

\paragraph{Objection 2: This is just ontic probability theory, not a new
ontology of information.}
One might grant that probability is ontic but deny that information adds
anything new: information is just negative log-probability, which is a
function of the probability measure, and the ontology is carried entirely
by the latter.

\emph{Reply}: The proposal does not claim that information is a new
fundamental entity on a par with particles or fields. It claims that
information is a \emph{well-defined structural property} of physical
systems, grounded in their possibility spaces. The philosophical payoff
is not metaphysical inflation but conceptual clarity: it explains why
Landauer's principle holds, why no-cloning is a genuine physical theorem
rather than a contingent engineering fact, and why black-hole information
loss is physically catastrophic--all in terms of a single, unifying
principle about possibility spaces.

\paragraph{Objection 3: The proposal cannot handle relational or
context-dependent information.}
Information in the sense of \emph{mutual information} is relational: it
is information that one system has \emph{about} another. Does the
counterfactual framework handle this?

\emph{Reply}: Yes. Mutual information between systems $A$ and $B$ is:

\begin{equation}
  I(A;B) = H(A) + H(B) - H(A,B),
  \label{eq:mutual}
\end{equation}

where $H(A,B)$ is the joint entropy. On the counterfactual reading,
$I(A;B)$ measures the degree to which the joint possibility space $\PS_{AB}$
is more constrained than the product of the marginal spaces $\PS_A \times
\PS_B$. This is a correlation between possibility spaces--a well-defined
structural property of the joint system.

\section{Implications}
\label{sec:implications}

\subsection{Foundations of quantum mechanics}

The counterfactual reading bears on several foundational questions in
quantum mechanics. The measurement problem--why quantum superpositions
appear to collapse to definite outcomes upon measurement--becomes, on
this reading, the question of how a system's possibility space interacts
with the observer's. Decoherence provides a partial answer: the system
becomes entangled with the environment, and the joint possibility space
becomes effectively classical from any local perspective~\cite{Zurek2003}.
But the hard part of the measurement problem--why a particular branch
is actualized--remains. The counterfactual framework does not dissolve
this problem, but it reframes it: the question is not ``what happens to
the wave function?'' but ``how do possibility spaces individuate across
observers?''

The counterfactual reading also has implications for quantum
contextuality~\cite{Kochen1967,Cabello2014}. Contextuality theorems show
that the outcomes of quantum measurements cannot be assigned definite
pre-existing values independent of the measurement context. On the
counterfactual reading, this is exactly what one would expect: the
possibility space of a measurement outcome is not a fixed property of
the system; it is co-determined by the measurement procedure. The
informational content of a measurement result is context-dependent not
because information is subjective but because the relevant possibility
space is constituted jointly by the system and the measurement procedure.

\subsection{Quantum gravity and the holographic principle}

The deepest implication concerns quantum gravity. The holographic
principle~\cite{Bekenstein1973,Susskind1995} states that the maximum
information content of a spatial region is proportional to its boundary
area, not its volume:

\begin{equation}
  S_{\max} = \frac{A}{4 l_P^2},
  \label{eq:bekenstein}
\end{equation}

where $A$ is the boundary area and $l_P = \sqrt{\hbar G / c^3}$ is the
Planck length. On the counterfactual reading, this is a statement about
possibility spaces: the number of physically realizable states of a
spatial region is bounded by its boundary area. This is a non-trivial
constraint on the geometry of possibility spaces in quantum gravity, and
it suggests that \emph{the fundamental degrees of freedom in quantum
gravity are informational rather than material}--not in Wheeler's
idealist sense, but in the sense that the possibility structures of
spatial regions are the primary objects of the theory.

AdS/CFT provides a concrete realization of this idea~\cite{Maldacena1997}.
The duality maps the bulk possibility space (the space of quantum
gravitational states in the interior) to the boundary possibility space
(the space of conformal field theory states on the boundary). This is
precisely the kind of isomorphism between possibility spaces that the
counterfactual framework predicts should be the relevant structural
invariant of the theory.

\subsection{Scientific realism and ontology}

More broadly, the counterfactual reading of information supports a
modest form of scientific realism that takes modal structure seriously.
Standard scientific realism is committed to the existence of theoretical
entities posited by successful scientific theories: electrons, quarks,
fields, and so forth. What the present analysis suggests is that a
complete realist ontology must include, in addition to actual entities,
the \emph{possibility structures} those entities instantiate. These
are not additional entities over and above the physical world; they are
structural features of the physical world--features that are encoded
in the laws of nature and that have measurable physical consequences.

This has implications for the debate between structural realism and
traditional object-oriented realism. The counterfactual reading supports
structural realism's claim that structure is fundamental, but it extends
the notion of structure to include modal structure. The world is not just
a network of actual causal relations; it is a network of \emph{possible}
causal relations, and the information content of physical events is
constituted by this modal network.

\section{Conclusion}
\label{sec:conclusion}

This paper has argued for a specific position in the philosophy of
information: \emph{information as counterfactual structure}. The central
claim is that the informational content of a physical event is not a
property of the event itself but of the possibility space from which
the event was selected. Information exists only where alternatives exist,
and its magnitude is a function of the structure of those alternatives.

Three independently motivated cases support this claim. Landauer's
principle is best understood as the thermodynamic cost of collapsing a
two-valued possibility space to a one-valued space, not as the destruction
of a physical substance. The no-cloning theorem reflects the impossibility
of duplicating quantum possibility spaces, not merely the impossibility of
copying individual quantum states. The black-hole information paradox is
the catastrophic merger of distinct possibility spaces into an
indistinguishable final state, which is why it threatens the foundations
of physics.

The philosophical framework that emerges--modal-structural realism--is a
modest extension of ontic structural realism that takes possibility spaces
seriously as objective features of the physical world. It avoids the
excesses of information physicalism (information is not a substance) and
information abstractionism (information has real physical consequences),
while going beyond standard structural realism by insisting that the
relevant structure is modal as well as actual.

Several questions remain open. A precise account of how possibility spaces
are individuated in theories without a fixed background spacetime--general
relativity, quantum gravity--is needed. The relation between the objective
possibility spaces of the counterfactual framework and the subjective
probability assignments of Bayesian agents requires careful development.
And the implications of modal-structural realism for the interpretation of
quantum mechanics, beyond the sketch offered in \cref{sec:implications},
deserve a dedicated treatment.

What the present analysis does establish is that the question ``Is
information physical?'' is the wrong question. The better question is:
``What role do physically realizable possibilities play in the ontology
of the natural world?'' The answer this paper defends is: a
foundational one.

\printbibliography

\end{document}